\newcommand{\keywords}[1]{\par\addvspace\baselineskip
\noindent\keywordname\enspace\ignorespaces#1}
\newcommand{\im}[2]{#1\langle #2 \rangle}
\newcommand{\et}[2]{\eta_{#1,#2}}
\newcommand{\rh}[2]{\rho_{#1 \rightarrow #2}}
\newcommand{\mer}[1]{\theta_{#1}}
\newcommand{\tw}{\mbox{tw}}
\newcommand{\cw}{\mbox{cw}}
\newcommand{\fw}{\mbox{fw}}
\begin{document}

\mainmatter  %

\title{A Natural Generalization of Bounded Tree-Width and Bounded Clique-Width}

\titlerunning{Generalization of Bounded Tree-Width and Bounded Clique-Width}

\author{Martin F\"urer%
\thanks{Research supported in part by NSF Grant CCF-0964655 and CCF-1320814.}}
\authorrunning{Martin F\"urer}

\institute{Department of Computer Science and Engineering \\
	Pennsylvania State University \\
	University Park, PA 16802,  USA \\
	furer@cse.psu.edu \\
\url{http://cse.psu.edu/~furer}}

\maketitle

\begin{abstract}
We investigate a new width parameter, the fusion-width of a graph. It is a natural generalization of the tree-width, yet strong enough that not only graphs of bounded tree-width, but also graphs of bounded clique-width, trivially have bounded fusion-width. In particular, there is no exponential growth between tree-width and fusion-width, as is the case between tree-width and clique-width. %
The new parameter gives a good intuition about the relationship between tree-width and clique-width.

\keywords{tree-width, clique-width, fusion-width, FPT, XP}
\end{abstract}

\section{Introduction}
Tree-width is a very natural concept. In an intuitive direct way, it measures how similar a graph is to a tree. Many graph problems are not only easy for trees, but also for other tree-like graphs. Indeed there is a huge number of efficient algorithms for graphs of bounded tree-width.

While graphs of bounded tree-width are sparse, there are some dense graphs, like the complete graph $K_n$ or the complete bipartite graph $K_{nn}$ for which most computational problems have a trivial solution. Graphs of bounded clique-width are intended to cover classes of graphs for which many problems have efficient solutions, even though they contain many dense graphs.

Not unlike tree-width, the concept of clique-width \cite{CourcelleO2000} is based on a type of graph decompositions \cite{CourcelleER93} chosen to allow fast algorithms for large classes of graphs. As the name clique-width indicates, this measure is designed to ensure that complete graphs have a very small width. But neither does clique-width measure a closeness to a clique in a natural way, nor is there an intuitive width (as in tree-width) involved in the definition of this concept. Even though the definition of clique-width is fairly simple, it is harder to obtain an intuition for the classes of graphs with small clique-width.

Bounded clique-width is an extension of bounded tree-width in the sense that every class of bounded tree-width is also a class of bounded clique-width \cite{CourcelleO2000,CorneilR2005}. But the worst case clique-width of graphs of tree-width $k$ has been upper \cite{CourcelleO2000,CorneilR2005} and lower \cite{CorneilR2005} bounded by an exponential in $k$. The fact that this containment result is a difficult theorem suggests that the extension from tree-width to clique-width might not be very natural.

Equivalent to clique-width up to a factor of 2 is the notion of \emph{$k$-NLC (node label controlled) graphs}. Partial $k$-trees have been shown to be $(2^{k+1} - 1)$-NLC trees \cite{Wanke94}. $k$-NLC trees are a sub-class of the $k$-NLC graphs.

In contrast to clique-width (and the width measure produced by NLC trees), we propose a natural generalization of tree-width, which simultaneously generalizes clique-width. Furthermore, containment in the new class is obtained basically without increasing the parameter in both cases. We call the new measure \emph{fusion-width}. We initially choose the name multi-tree-width to emphasize that it is a very natural extension of 
tree-width, by which it is motivated
(even though it is much more powerful). We follow the strong suggestion of two referees to name it differently. The main difference is that while tree-width deals with single vertices or pairs of vertices at a time, fusion-width deals with multiple vertices (with the same label) or pairs of sets of vertices at a time. 

We show that the clique-width grows at most exponentially in the fusion-width, implying that all classes of graphs with bounded fusion-width also have bounded clique-width. Furthermore, for some classes of graphs, there is really such an exponential growth.

Other width parameters generalizing both tree-width and clique-width without blowing up the parameter are rank-width \cite{OumS2006} and boolean width \cite{Bui-XuanTV2011}. The rank-width has the additional nice property of being computable in FPT \cite{HlinenyO2007}. For an overview of width parameters, see \cite{HlinenyOSG08}. There are infinite classes of graphs where the clique-width is exponentially bigger than the boolean width.

The fusion-width has the additional property of being easy to work with and of being the most natural generalization of tree-width and clique-width. This is an essential strength of the new notion of fusion-width.

\section{Definitions}
For the definitions of \emph{FPT (fixed parameter tractable)}, \emph{XP (fixed parameter polynomial time)}, and \emph{tree decomposition}, see e.g. \cite{DowneyF99}.

\begin{definition}
The \emph{tree-width} $\tw(G)$ \cite{RobertsonS84} of a graph $G$ is the smallest integer $k$,
such that $G$ has a tree decomposition with largest bag size $k+1$.
\end{definition}

It is NP-complete to decide whether the tree-width of a graph is at most $k$ (if $k$ is part of the input) \cite{ArnborgCP87}.
For every fixed $k$, there is a linear time algorithm deciding whether the tree-width is at most $k$, and if that is the case, producing a corresponding tree decomposition \cite{Bodlaender96}. For arbitrary $k$, this task can still be approximated. A tree decomposition of width $O(k \log n)$ %
\cite{BodlaenderGHK95} and even $5k$ \cite{BodlaenderDDFLP2013} can be found in polynomial time.

Closely related to tree-width is the notion of \emph{branch-width} \cite{RobertsonS91}.

\begin{definition}
A \emph{$k$-expression} is an expression formed from the atoms  $i(v)$, the two unary operations $\et{i}{j}$ and $\rh{i}{j}$, and one binary operation $\oplus$ as follows.
\begin{itemize}
\item $i(v)$ creates a vertex $v$ with label $i$, where $i$ is  from the set $\{1, \dots , k\}$.
\item $\et{i}{j}$ creates an edge between every vertex with label $i$ and every vertex with label $j$ for $i \neq j$.
\item $\rh{i}{j}$ changes all labels $i$ to $j$.
\item $\oplus$ does a disjoint union of the generated labeled graphs.
\end{itemize}
Finally, the \emph{generated graph} is obtained by deleting the labels.
\end{definition}

\begin{definition}
The \emph{clique-width} $\cw(G)$ of a graph is the smallest $k$ such that the graph can be defined by a $k$-expression \cite{CourcelleER93,CourcelleO2000}.
\end{definition}

Computing the clique-width is NP-hard \cite{FellowsRRS2006}.
Thus, one usually assumes that a graph is given together with a $k$-expression.
For constant $k$, the clique-width can be approximated by a constant factor in polynomial time \cite{OumS2006}, in fact, this factor can be made smaller than 3 \cite{Oum2008}.

\section{The Fusion-Width}
We define a new width measure $\fw(G)$ (fusion-width of $G$) with the properties
\[\fw(G) \leq \tw(G) + 2 \mbox{ and } \fw(G) \leq \cw(G).\]
We want these containments to be obvious.
Still, we would like all tasks known to be solvable in polynomial time for graphs of bounded clique-width (and therefore of bounded tree-width) to be solvable in polynomial time also for graphs of bounded fusion-width.

These inequalities immediately imply that the class of graphs of bounded clique-width is  contained in the class of graphs of bounded fusion-width.
The definition of fusion-width is obtained as a simple extension of the definition of clique-width by a new operation $\mer{i}$, merging all vertices with label $i$.

\begin{definition}
 A \emph{$k$-fusion-tree expression} is an expression formed from the atoms  $\im{i}{m}$, the three unary operations $\et{i}{j}$, $\rh{i}{j}$, and $\mer{i}$, and one binary operation $\oplus$ as follows.
\begin{itemize}
\item $\im{i}{m}$ creates a graph consisting of $m$ isolated vertices labeled $i$, where $i$ is  from the set $\{1, \dots , k\}$.
\item $\et{i}{j}$ creates an edge between every vertex with label $i$ and every vertex with label $j$ for $i \neq j$. 
\item $\rh{i}{j}$ changes all labels $i$ to $j$.
\item $\mer{i}$ merges all vertices labeled $i$ into one vertex. The new vertex is labeled $i$ and is adjacent to every vertex not labeled $i$ to which some vertex labeled $i$ was adjacent before the operation.
\item $\oplus$ does a disjoint union.
\end{itemize}
Finally, the \emph{generated graph} is obtained by deleting all the labels.
\end{definition}

It might be more natural not to require $i \neq j$ for $\et{i}{j}$.
This is insignificant, but would have the nice effect of giving any clique a fusion-width of 1 instead of 2. Thus the simplest graphs in this measure would just be the collections of disjoint cliques.
Nevertheless, we stick to the traditional $\et{i}{j}$ operation.

\begin{definition}
 The \emph{fusion-width} $\fw(G)$ of a graph $G$ is the smallest $k$ such that there is a $k$-fusion-tree expression generating $G$.
\end{definition}

The operation $\im{i}{m}$ is introduced for convenience and to emphasize the possibility to create huge collections of identical vertices and huge bipartite graphs (together with $\et{i}{j}$). We always want to emphasize the difference between fusion-width and tree-width here.
Otherwise, except for the $\mer{i}$ operation, we have just the clique-width operations, and the slightly more efficient version of vertex creation.

Compared to clique-width, the definition of fusion-width contains the completely new operation $\mer{i}$. It is introduced to directly mimic the tree-width construction. An immediate consequence is that bounded tree-width graphs have also bounded fusion-width, without a difficult proof and without an exponential blow-up in the width parameter.
This is in sharp contrast to the relationship between bonded tree-width and bounded clique-width.

Thanks to a previous referee, we know that Courcelle and Makowsky \cite{CourcelleM2002} have already defined the parameter $\fw(G)$ when they showed that labelled graphs of bounded clique-width are closed their fusion operator. They call the parameter $\mbox{cwd}'(G)$, viewing it as an alternative notion of clique-width (justified by bounded clique-width being equivalent to bounded $\mbox{cwd}'(G)$). They don't propose $\mbox{cwd}'(G)$ to be used as a new width measure nor do they relate $\mbox{cwd}'(G)$ to tree-width.

\begin{theorem}
 Graphs with clique-width $k$ have fusion-width at most $k$. Furthermore, the number of operations does not increase from the associated $k$-expression to the associated $k$-fusion-tree expression.
\end{theorem}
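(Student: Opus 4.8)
The plan is to observe that, up to a trivial rewriting of the vertex-creating atoms, a $k$-fusion-tree expression has available to it every operation of a $k$-expression, plus the extra operation $\mer{i}$ which we simply decline to use. Concretely, a $k$-expression is built from the atom $i(v)$, the unary operations $\et{i}{j}$ and $\rh{i}{j}$, and the binary operation $\oplus$, while a $k$-fusion-tree expression is built from the atom $\im{i}{m}$, the unary operations $\et{i}{j}$, $\rh{i}{j}$, $\mer{i}$, and the binary operation $\oplus$. The three operations $\et{i}{j}$, $\rh{i}{j}$, $\oplus$ are literally identical in both formalisms and act on the same label set $\{1, \dots, k\}$, so the only gap to bridge is the difference between the two atoms.

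First I would note that the atom $i(v)$, which creates a single vertex with label $i$, is exactly the special case $\im{i}{1}$ of the fusion-tree atom when $m=1$. This gives a direct simulation of the clique-width atom by a fusion-tree atom over the same label $i$, without introducing any new label.

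Then, given any $k$-expression $E$ defining $G$, I would build a $k$-fusion-tree expression $E'$ by replacing every atom $i(v)$ with $\im{i}{1}$ and leaving every $\et{i}{j}$, $\rh{i}{j}$, and $\oplus$ unchanged; the operation $\mer{i}$ never appears in $E'$. A routine structural induction on the syntax tree then shows that $E$ and $E'$ produce the same labeled graph at every corresponding subexpression: the base case is exactly the observation above, and in each of the three inductive cases the applied operation is identical and acts on identically labeled inputs, hence yields identical outputs. In particular the top-level expressions agree, and after deleting labels the generated graph is $G$ in both cases, so $G$ has a $k$-fusion-tree expression and $\fw(G) \leq k$.

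Finally, for the operation count, I would observe that the map $E \mapsto E'$ is an arity-preserving bijection between the nodes of the two syntax trees: atoms map to atoms, each unary operation maps to the same unary operation, and $\oplus$ maps to itself. The two expressions therefore have exactly the same number of operations, so no increase occurs. There is essentially no hard step here; the only point requiring care is to confirm in the induction that the label sets are preserved, so that $\et{i}{j}$ and $\rh{i}{j}$ act on the same vertices in $E$ and $E'$, which is immediate once the labeled graphs are seen to coincide.
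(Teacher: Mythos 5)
Your proposal is correct and takes essentially the same approach as the paper, which simply observes that every $k$-expression operation is also a $k$-fusion-tree operation once $i(v)$ is rewritten as $\im{i}{1}$. Your structural induction and operation-count bijection just spell out in detail what the paper dismisses as trivial.
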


\begin{proof}
 This is trivial, because all the operations of $k$-expressions are also operations of the $k$-fusion-tree expressions (with the obvious variation of replacing $i(v)$ by $\im{i}{1}$).
\qed
\end{proof}

The width parameter increases by at most 2 from tree-width to fusion-width. This tiny increase has two causes. An increase of 1 is just due to the somewhat artificial push-down by 1 in the definition of tree-width. We don't use it for fusion-width, because we want to align the measure with clique-width. The other increase by 1 is needed to have an extra label for the vertices that have already received all their incident edges.

\begin{theorem}  \label{thm:width}
 Graphs of tree-width $k$ have fusion-width at most $k+2$.
\end{theorem}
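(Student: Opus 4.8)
The plan is to take a tree decomposition of $G$ of width $k$ and transform it, bottom-up, into a $(k+2)$-fusion-tree expression. First I would replace the given decomposition by a \emph{nice} tree decomposition of the same width, so that every node is a leaf, an introduce, a forget, or a join node; this does not change $\tw(G)$. The labels $\{1, \dots , k+1\}$ are reserved for the vertices currently sitting in the active bag, and the extra label $k+2$ is reserved for the vertices that have already been forgotten. The key preprocessing step is to fix, once and for all, a single label $\lambda(v) \in \{1, \dots , k+1\}$ for every vertex $v$ so that the vertices of each bag receive pairwise distinct labels. This is possible because the graph in which two vertices are adjacent whenever they share a bag is chordal with clique number at most $k+1$, hence properly colorable with $k+1$ colors, and such a coloring is precisely a rainbow labeling of all bags. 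Fixing $\lambda$ in advance means a vertex carries the same label throughout its entire subtree of bags, which is exactly what makes the join step clean.

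I would then build the expression by induction on the nice tree decomposition, maintaining the invariant that after processing the subtree rooted at a node $i$, the generated labeled graph contains exactly the vertices appearing in that subtree, carries label $\lambda(v)$ on each $v \in X_i$ and label $k+2$ on every forgotten vertex, and already contains every edge of $G$ both of whose endpoints have been active together. A leaf creates its bag with $\im{\lambda(v)}{1}$ and $\oplus$ and inserts the induced edges with $\et{\lambda(u)}{\lambda(v)}$. An introduce node for $v$ adds $v$ by $\oplus$ with $\im{\lambda(v)}{1}$ and then inserts, via $\et{\lambda(v)}{\lambda(u)}$, the edges from $v$ to its currently active neighbors $u$; since the active vertices form a rainbow bag and the only other label present is $k+2$, each such $\et{\lambda(v)}{\lambda(u)}$ inserts precisely the one intended edge. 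A forget node for $v$ simply applies $\rh{\lambda(v)}{k+2}$, retiring $v$ to the done label. The heart of the construction is the join node: its two children produce two copies of each $v \in X_i$, both carrying label $\lambda(v)$, so I take their $\oplus$ and apply $\mer{\lambda(v)}$ for every $v \in X_i$, fusing the two copies into a single vertex that inherits the neighbors witnessed in either branch. Because all forgotten vertices sit on label $k+2$, which is never merged, each $\mer{\lambda(v)}$ identifies exactly the two copies of $v$ and nothing else.

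The correctness argument rests on two observations. First, before a vertex is forgotten it has already received all of its incident edges: if $v \in X_i$ but $v$ is absent from the parent bag, then by the connectivity axiom the whole subtree of bags containing $v$ lies within the subtree rooted at $i$, so every neighbor of $v$ has shared a bag with $v$ at or below $i$ and the corresponding edge was inserted at a leaf or introduce node. This justifies retiring $v$ to label $k+2$ and never touching it again. Second, the global coloring $\lambda$ guarantees that the two children of a join assign identical labels to the shared bag, so the merges line up. I expect the main obstacle to be exactly this bookkeeping at the join and forget nodes, namely checking that at the moment of each $\mer{\lambda(v)}$ the label $\lambda(v)$ is borne by precisely the two copies to be identified, and that no premature forgetting drops an edge; both reduce to the stated invariant. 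Finally, the expression uses only the labels $\{1, \dots , k+1, k+2\}$, so it is a $(k+2)$-fusion-tree expression, giving $\fw(G) \leq k+2$ as claimed.
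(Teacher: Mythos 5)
Your proposal is correct and follows essentially the same route as the paper: a bottom-up traversal of a tree decomposition using $k+1$ labels assigned consistently and rainbow-wise to bag vertices, one extra label for retired vertices, and $\mer{i}$ operations to fuse the copies of a vertex shared between children. Your use of a nice tree decomposition and the chordal-coloring justification for the global labeling $\lambda$ merely formalizes what the paper does with its top-down greedy label assignment.
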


\begin{proof}
 We start with a tree decomposition with bag size $k+1$, and transform it into a $k+2$-fusion-tree expression in a bottom-up way.
One special label is reserved for vertices that have already been handled, i.e., all their incident edges have been produced. Here we refer to the $k+1$ other labels as regular labels.
 
In each bag, we use different labels for different vertices. Thus, when handling a bag, it is trivial to introduce all edges present in that bag by the corresponding $\et{i}{j}$ operations.

Choosing such a labeling is easy to do top down. We select an arbitrary node as the root of the decomposition tree, and assign distinct labels to the vertices in its bag. Before we assign labels to vertices in the bag of a node $j$ not appearing in the bag of the parent node, we assume that every vertex appearing in the bag of $j$ as well as in the bag of its parent node of the decomposition tree has already received its label. If there are still vertices in the bag of node $j$ without a label, then there are enough unused regular labels for these vertices, because we have $k+1$ regular labels and at most $k+1$ vertices in the bag of $j$.
 
 The only slightly tricky part of the fusion-tree expression is the handling of the fact that the same vertices occur in the bags of both children of a node in the tree decomposition. This is handled by using distinct vertices with the same label. When handling a node in the decomposition tree, such that more than one of its children contain the same vertex $v$ in their bags, a merge operation $\mer{i}$ is issued with $i$ being the common label of all occurrences of $v$ in the subtrees.
\qed
\end{proof}

Naturally, the following corollary is an immediate consequence.

\begin{corollary}
 If a problem can be solved in time $T(n,k)$ for graphs with $n$ vertices and fusion-width $k$, then it can be solved in time $T(n,k+2)$ for graphs with $n$ vertices and tree-width $k$.
\end{corollary}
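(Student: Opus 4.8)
The plan is to treat the corollary as a direct, black-box consequence of Theorem~\ref{thm:width}. I would fix a graph $G$ on $n$ vertices with $\tw(G)=k$. Theorem~\ref{thm:width} gives $\fw(G)\le k+2$, so $G$ belongs to the class of graphs whose fusion-width is at most $k+2$. Under the standard convention that the bound $T(n,k)$ covers every graph of fusion-width \emph{at most} $k$ (so that the parameter is an upper bound and $T$ is read as monotone in its second argument), the assumed algorithm decides the problem on any such graph in time $T(n,k+2)$. Applying this to $G$ yields the claimed running time $T(n,k+2)$ for graphs of tree-width $k$, and since $G$ was arbitrary the corollary follows.

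The one point requiring care is whether the fusion-width algorithm expects a $(k+2)$-fusion-tree expression as part of its input, as is customary for clique-width-style algorithms (recall that even computing $\cw(G)$ is NP-hard, so one usually supplies the expression). If it does, I would appeal to the fact that the construction in the proof of Theorem~\ref{thm:width} is fully constructive: given a width-$k$ tree decomposition of $G$---obtainable in linear time for fixed $k$ by Bodlaender's algorithm, or within the polynomial approximation guarantees cited earlier---the bottom-up transformation produces an explicit $(k+2)$-fusion-tree expression in time polynomial (indeed linear) in $n$. This preprocessing cost is subsumed by $T(n,k+2)$, so the stated bound is unaffected.

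Consequently, the main ``obstacle'' here is not mathematical but one of bookkeeping: making explicit the monotonicity assumption on $T$ and confirming that the expression-building step inherited from Theorem~\ref{thm:width} is cheap enough to be absorbed into $T(n,k+2)$. Once these conventions are pinned down, the corollary is immediate from the containment $\fw(G)\le\tw(G)+2$, which is precisely why it is stated as a corollary rather than a theorem.
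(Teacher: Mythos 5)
Your proposal is correct and matches the paper, which offers no separate argument and simply calls the corollary an immediate consequence of Theorem~\ref{thm:width} ($\fw(G)\le\tw(G)+2$). Your added remarks on the monotonicity convention for $T$ and on constructing the $(k+2)$-fusion-tree expression from the tree decomposition are sensible bookkeeping that the paper leaves implicit.
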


This corollary should be compared with the corresponding important result for clique-width.

 If a problem can be solved in time $T(n,k)$ for graphs with $n$ vertices and clique-width $k$, then it can be solved in time $T(n,3 \cdot 2^{k-1})$ for graphs with $n$ vertices and tree-width $k$ \cite{CorneilR2005}.
 
 Instead of using this result, one would rather apply the corollary with its much stronger conclusion, provided that the premises are comparable. 
 
 We believe that whenever there is a nice argument that a natural problem can be solved for graphs of bounded clique-width, then we are able to nicely handle the operations $\et{i}{j}$, $\rh{i}{j}$, and $\oplus$.
Usually, we would then also have a nice argument that the problem could be solved for graphs of bounded tree-width, and we could nicely handle the operation $\mer{i}$. In such a situation, we would be able to handle all the fusion-width operations nicely, and therefore would also have an elegant algorithm whose running time should not be too bad as a function of the fusion-width.

The allowance of merging vertices with the $\mer{i}$ operation might cause two concerns. First, it is more powerful than necessary for our results. It would be sufficient to restrict it to sets of vertices of size 2. Nevertheless, we opted for the more flexible notion, because it does not cause any problems. A second concern looks more important. As the construction of graphs allows them to grow and shrink, it is reasonable to ask whether there are graphs of bounded fusion-width requiring super-polynomial size $k$-fusion-tree expressions. This is not the case

\begin{proposition}
 Every graph of fusion-width $k$ has a $k$-fusion-tree expression of size $O(|V|+|E|)$.
\end{proposition}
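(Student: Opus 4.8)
The plan is to start from an arbitrary $k$-fusion-tree expression $\phi$ generating $G$ (one exists since $\fw(G)\le k$) and to normalize it, deleting redundant operations until its size is linear in $|V|+|E|$. I view $\phi$ as a binary tree whose leaves are the atoms $\im{i}{m}$ and whose internal nodes are the operations. Its size is then the number of atoms, plus the number of $\oplus$ nodes (which equals the number of atoms minus one), plus the total number of unary operations $\et{i}{j}$, $\rh{i}{j}$, and $\mer{i}$. I aim to bound each of these quantities by $O(|V|+|E|)$, with the constant depending on $k$.

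The heart of the argument, and the part that directly addresses the ``grow and shrink'' concern, is bounding the number of created vertices, since each surviving atom creates at least one. Call the vertices produced by the atoms \emph{copies}; each final vertex $v$ of $G$ is obtained by merging a set of copies through $\mer{i}$ operations. I want to show that useless copies can be deleted so that $v$ retains at most $\deg(v)+1$ copies. To this end, for every edge $vu$ of $G$ I charge it to the copy of $v$ that is the first, under some $\et{i}{j}$, to receive a current edge whose image in $G$ is $vu$. A copy receiving no charge contributes only current edges that are parallel to already-charged ones or that join it to other copies of the same final vertex (and so are destroyed by the eventual merge); hence deleting it, by decrementing the multiplicity $m$ in its atom and dropping the atom once $m$ reaches $0$, leaves the generated graph unchanged. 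After this cleanup the number of surviving copies of $v$ is at most $\deg(v)+1$, so the total number of copies, and therefore of atoms and of $\oplus$ nodes, is at most $\sum_{v}(\deg(v)+1)=2|E|+|V|=O(|V|+|E|)$.

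Next I would bound the three unary operations. Each useful $\mer{i}$ strictly decreases the number of present vertices, and the total decrease over the whole expression is at most the number of copies ever created, so after discarding merges that act on at most one vertex there are $O(|V|+|E|)$ merges. For $\et{i}{j}$ and $\rh{i}{j}$ I would apply the standard clique-width normalization inside each maximal run of operations that neither creates, merges, nor branches: in such a run the assignment of labels to the present vertices is fixed, so repeating an $\et{i}{j}$ on the same pair is useless and at most $\binom{k}{2}$ edge operations can be useful, while $O(k^2)$ relabelings suffice to reconfigure the labels before the next creation, merge, or union. Since the number of such runs is proportional to the number of atoms, merges, and $\oplus$ nodes, each already bounded by $O(|V|+|E|)$, the number of $\et{i}{j}$ and $\rh{i}{j}$ operations is $O(k^2(|V|+|E|))$, which is $O(|V|+|E|)$ for fixed $k$. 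Summing the four contributions yields the claimed bound.

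The step I expect to be the main obstacle is the copy-deletion argument: one must check carefully that removing an uncharged copy never alters the final graph, even though $\mer{i}$ can destroy edges between two copies of the same vertex and can collapse several parallel current edges into a single edge of $G$. The charging scheme is chosen to be robust to exactly these phenomena, since it counts only the first current edge mapping to each final edge; but turning this into a clean induction over the structure of $\phi$, and verifying that decrementing an atom's multiplicity is always legitimate (in particular that the last copy of a vertex is never removed), is the delicate part.
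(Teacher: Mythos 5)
Your proposal is correct and follows essentially the same route as the paper: the paper's proof also argues that every created vertex is either useless (and its creation can be omitted), an isolated vertex of $G$, or responsible for at least one edge of $G$, giving the $|V|+|E|$ bound on created vertices, and then bounds the remaining operations by observing that only $O(k^2)$ useful unary operations can occur in a row. Your explicit charging scheme and the separate counts for $\mer{i}$, $\et{i}{j}$, and $\rh{i}{j}$ merely flesh out what the paper dismisses as obvious.
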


\begin{proof}
 Whenever some vertices are merged due to a $\mer{i}$ operation, it is possible that some edges are merged too. Assume, there is a vertex $v$ that has been used to create some edge set $E_v$ with some $\eta_{i,j}$ operations. Further assume that when $v$ is merged with some set of vertices, every edge of $E_v$ is merged with at least one other edge. Then we obtain the same graph by omitting the creation of vertex~$v$. In other words, every vertex ever created is either useless, or it is an isolated vertex in the resulting graph $G=(V,E)$, or it is responsible for at least one edge. Now, assume no useless vertices (that have no effect and disappear in a later merge operation) are ever created. Then the number of vertices ever created is at most  $|V|+|E|$. Furthermore, it is obvious, that without unnecessary label change operations $\rh{i}{j}$, the graph $G$ has a $k$-fusion-tree expression of size $O(|V|+|E|)$. \qed
\end{proof}

\section{Illustration with the Independent Set Polynomial}
Naturally, we know that finding a maximum independent set is possible in polynomial time for graphs of bounded clique-width \cite{CourcelleMR2000}. In fact the far reaching meta-theorem of Courcelle et al.\ \cite{CourcelleMR2000}
shows that this result is not just valid for the maximum independent set problem, but for every problem expressible in monadic second order logic with quantification only over sets of vertices (not edges). Furthermore, the resulting algorithm shows the problem to be in FPT with the clique-width as the parameter. 

Here, we look at a more difficult problem.
Instead of just finding the size of a maximum independent set for graphs of bounded clique-width, we count the number of independent sets of all sizes. 
We present a fixed parameter polynomial time algorithm for this counting problem.
We refer to \cite{CourcelleMR01,FischerMR08} for more discussions of the fixed parameter tractability of counting problems.

Let $[k] = \{1,\dots , k\}$ be the set of vertex labels. We define the $[k]$-labeled independent set polynomial of a $[k]$-labeled graph $G$ by
\[P(x,x_1, \dots , x_k) = \sum_{i=1}^n \sum_{(n_1,\dots,n_k) \in \{0,1\}^k} a_{i;n_1,\dots,n_k} 
	\, x^i \prod_{j=1}^k x_j^{n_j} \]
where $n_j \in \{0,1\}$ and $a_{i;n_1,\dots,n_k}$ is the number of independent sets of size $i$ in $G$ which contain some vertices with label $j$ if and only if $n_j = 1$.

We define the independent set polynomial of a graph $G$ by
\[I(x) = \sum_{i=1}^n  a_i x^i \]
where $a_i$ is the number of independent sets of size $i$ in $G$.

Then the independent set polynomial $I(x)$ can immediately be expressed by the $[k]$-labeled independent set polynomial $P(x,x_1, \dots , x_k)$.

\begin{proposition}
 The independent set polynomial $I(x)$ of a $[k]$-labeled graph $G$ is
 \[I(x) =  \sum_{(n_1,\dots,n_k) \in \{0,1\}^k} P(x,1,\dots,1)   = \sum_{i=1}^n \sum_{(n_1,\dots,n_k) \in \{0,1\}^k} a_{i,n_1,\dots,n_k} x^i. \]
\end{proposition}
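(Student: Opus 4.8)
The plan is to establish the single identity $I(x) = P(x,1,\dots,1)$ by comparing the two polynomials coefficient by coefficient in powers of $x$; the intervening expression $\sum_{(n_1,\dots,n_k)} P(x,1,\dots,1)$ in the statement should be read as the plain value $P(x,1,\dots,1)$, since that value carries no free index over which to sum. With this reading, the entire argument reduces to a statement about the integers $a_i$ and $a_{i;n_1,\dots,n_k}$.

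First I would substitute $x_1 = \dots = x_k = 1$ into the definition of $P$. For every signature $(n_1,\dots,n_k) \in \{0,1\}^k$ the monomial $\prod_{j=1}^k x_j^{n_j}$ collapses to $1$, so
\[P(x,1,\dots,1) = \sum_{i=1}^n \Bigl( \sum_{(n_1,\dots,n_k) \in \{0,1\}^k} a_{i;n_1,\dots,n_k} \Bigr) x^i,\]
which is exactly the right-hand expression of the claim. It then remains only to identify, for each fixed $i$, the inner sum with $a_i$. The key observation is that the quantities $a_{i;n_1,\dots,n_k}$ refine the count $a_i$ according to a well-defined invariant: every independent set $S$ of size $i$ determines a \emph{unique} label signature $(n_1,\dots,n_k)$, namely $n_j = 1$ precisely when $S$ contains at least one vertex labeled $j$. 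Hence the size-$i$ independent sets are partitioned into the $2^k$ classes counted by the $a_{i;n_1,\dots,n_k}$, one class per signature, and these classes are pairwise disjoint and exhaustive. Summing over all signatures therefore recovers the full count, $\sum_{(n_1,\dots,n_k)} a_{i;n_1,\dots,n_k} = a_i$, and substituting this back gives $P(x,1,\dots,1) = \sum_i a_i x^i = I(x)$.

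I expect no genuine obstacle here: the proof is pure bookkeeping, resting only on the fact that ``$S$ contains a vertex of label $j$'' is a yes/no property for each $j$, so that the signature map is a well-defined function on independent sets whose fibers partition them. The one point requiring mild care is the reading of the statement itself, where the spurious outer summation in front of $P(x,1,\dots,1)$ is to be dropped.
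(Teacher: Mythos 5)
Your proof is correct and is exactly the routine bookkeeping the paper has in mind (the paper states this proposition without proof, treating it as immediate from the definitions). You are also right that the outer summation $\sum_{(n_1,\dots,n_k)}$ in front of $P(x,1,\dots,1)$ is spurious as written --- since $P(x,1,\dots,1)$ does not depend on the tuple, the intended reading is the plain substitution, and your partition-by-signature argument correctly establishes $\sum_{(n_1,\dots,n_k)} a_{i;n_1,\dots,n_k} = a_i$.
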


\begin{theorem} \label{thm:indset}
 Given a graph $G$ with $n$ vertices and bounded fusion-width $k$, and  a polynomial size $k$-fusion-tree expression generating $G$, the independent set polynomial $I(x)$ of $G$ can be computed in FPT, i.e., in time $f(k) n^{O(1)}$ for  some function $f$.
 \end{theorem}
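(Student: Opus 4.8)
The plan is to compute the $[k]$-labeled independent set polynomial by dynamic programming over the parse tree of the given $k$-fusion-tree expression, processing its $n^{O(1)}$ operations bottom-up, and then to read off $I(x)$ from the root's data as in the preceding proposition (by summing over all label patterns). For each intermediate labeled graph $H$ produced by a subexpression I would store a table of polynomials in $x$, one per relevant ``label state'', each of degree at most $n$, recording — graded by independent-set size — the number of independent sets of $H$ having that state. As long as there are only $f(k)$ states and each operation updates the whole table in time $f(k)\,n^{O(1)}$, the total time is $f(k)\,n^{O(1)}$, i.e.\ FPT.

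If we track only the $2^k$ states ``which labels the independent set uses'' (this is exactly $P$), then the atoms and the clique-width operations are routine. For $\im{i}{m}$ the empty set contributes $1$ to the empty-label state and the nonempty subsets contribute $(1+x)^m-1$ to the state using only $i$. The operation $\et{i}{j}$ simply deletes every independent set using both $i$ and $j$, since exactly those sets acquire a new edge. The relabeling $\rh{i}{j}$ renames each state by replacing $i$ with $j$ and sums colliding entries. The disjoint union $\oplus$ multiplies the two tables with label states combining by union and sizes adding, a covering product over the subset lattice, well within the FPT budget.

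The hard part will be the new operation $\mer{i}$, for which the used-label states do not suffice. Merging all label-$i$ vertices into a single vertex $w$ whose neighborhood is the union of their neighborhoods means that an independent set of the result either avoids $w$ (and is an old independent set not using $i$) or contains $w$ (and is an old independent set that uses no label-$i$ vertex and, crucially, meets the \emph{union} of all label-$i$ neighborhoods nowhere); a small example shows this last count is not determined by $P$. My resolution is to refine each label coordinate to three states: an independent set either (U) uses label $i$, or (F) is \emph{free} for $i$, avoiding the closed neighborhood of the entire label-$i$ class, or (B) is \emph{blocked} for $i$, using no label-$i$ vertex but hitting that neighborhood. This gives a table of $3^k$ polynomials, and one checks each operation preserves it: for $\et{i}{j}$ a free state for $i$ degrades to blocked exactly when the set uses $j$ (and symmetrically); for $\oplus$ the three states combine coordinatewise by ``uses beats blocked beats free''; and $\mer{i}$ takes its free-for-$i$ part as the sets extendable by $w$, appends $w$ (multiplying by $x$ and setting that coordinate to U), leaves the $w$-avoiding part alone, and reclassifies the remaining coordinates, for which it suffices to also maintain the $k \times k$ Boolean matrix recording which label classes are joined by an edge.

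Finally I would sum all $3^k$ root polynomials to obtain $I(x)$, and bound the cost: with $n^{O(1)}$ operations, each touching $3^k$ polynomials of degree at most $n$ (the union step, a covering product over $\{F,B,U\}^k$, being the most expensive), the total is $f(k)\,n^{O(1)}$. The single delicate point is verifying that the three-state invariant is genuinely maintained by $\et{i}{j}$, $\rh{i}{j}$, and especially $\mer{i}$; the rest is bookkeeping.
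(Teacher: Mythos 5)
Your proposal is correct in outline, but it reaches the goal by a genuinely different route from the paper. The paper also does a bottom-up DP on the parse tree, but it keeps the state space at $2^k$ (only ``which labels does the independent set use'', i.e.\ exactly $P(x,x_1,\dots,x_k)$) and tames $\mer{i}$ by \emph{normalizing the expression} rather than enriching the state: it argues that w.l.o.g.\ every merge acts on exactly two vertices and occurs immediately after the $\oplus$ or $\rh{i}{j}$ that brought them together (replacing $\im{i}{m}$ by $\im{i}{1}$ when a later merge makes the multiplicity irrelevant). Then ``the set contains the merged vertex $w$'' is identified with ``the set contains both pre-images $u$ and $v$'', which in the product polynomial is precisely the condition that $x_i$ appears with exponent $2$; the update is ``delete exponent-$1$ monomials, replace $x_i^2$ by $x_i/x$''. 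Your observation that $P$ alone does not determine the merge update is exactly the obstacle this normalization is designed to remove. Your alternative --- refining each coordinate to uses/blocked/free, where ``free'' means avoiding the closed neighborhood of the whole label class, plus the $k\times k$ class-adjacency matrix needed to reclassify coordinates when $w$ is inserted --- handles a general $\mer{i}$ in place, at the cost of a $3^k$ table instead of $2^k$. What each buys: the paper's version gives the sharper $O(k^3 2^k n)$-type bound and a shorter invariant, but its correctness leans on the restructuring step and on the tacit assumption that the two merged vertices are non-adjacent; your free/blocked invariant is heavier but self-contained, survives merges of classes with internal edges, and needs no preprocessing of the expression. Both fit comfortably in $f(k)\,n^{O(1)}$, so your argument, once the three-state preservation checks you flag are written out, is a valid proof of the theorem.
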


\begin{proof}
 We have to show how to compute the $[k]$-labeled independent set polynomial $P(x,x_1, \dots , x_k)$ of a $[k]$-labeled graph $G$.
 We compute it recursively bottom-up for the given $k$-fusion-tree expression.
 For the edgeless graph with $m$ vertices and label $i$ generated by $\im{i}{m}$, the $[k]$-labeled independent set polynomial is
 \[ x + \sum_{j=1}^m {m \choose j} x^j x_i. \]
 This polynomial is computable in time polynomial in $n$, because w.l.o.g, we can assume $m \leq n$. Otherwise, some set of vertices would be constructed together (by the $\im{i}{m}$ operation) and destroyed together (with the same merge operation $\theta_{i}$). Such a redundancy can easily be removed in a preprocessing phase.

In the following, assume for some $[k]$-labeled graph $H$, the $[k]$-labeled independent set polynomial is $\tilde{P}(x,x_1, \dots , x_k)$. 

Let $G$ be obtained from $H$ by the operation $\et{i}{j}$. Then the $[k]$-labeled independent set polynomial $P(x,x_1, \dots , x_k)$ of $G$ is obtained from the $[k]$-labeled independent set polynomial $\tilde{P}(x,x_1, \dots , x_k)$ of $H$ by deleting all monomials that are multiples of $x_i x_j$. These monomials count sets that are no longer independent after inserting all the edges between vertices labeled $i$ and $j$. 

W.l.o.g., we can assume that before a merge operation $\mer{i}$ is done, there are only 2 vertices labeled $i$. This assumption is allowed for two reasons.
\begin{enumerate}
\item 
If later a $\mer{i}$ operation is done, then every $\im{i}{m}$ operation can be replaced by an $\im{i}{1}$ operation without changing the graphs obtained after the $\mer{i}$ operation. Creating many equivalent vertices (with the same neighbors) and merging them later has the same effect as creating just one vertex.
\item
Every $\mer{i}$ operation can be replaced by a collection of $\mer{i}$ operations done immediately after a disjoint union $\oplus$ or a  relabeling operation  $\rh{i}{j}$ has created a graph with two vertices labeled $i$.
\end{enumerate}

We describe the $\mer{i}$ operation not in isolation, but only immediately after a  disjoint union $\oplus$ or a  relabeling operation  $\rh{i}{j}$. 

We now describe how to obtain the polynomial $P(x,x_1, \dots , x_k)$ of $G$ from the polynomials $P_r(x,x_1, \dots , x_k)$ of $H_r$ ($r=1,2$), where $G$ is obtained from $H_1$ and $H_2$ by the operation $\mer{i_1}\dots \mer{i_{\ell}}(H_1 \oplus H_2)$.
For ease of notation, assume that $\{x_1,\dots,x_{\ell}\} = \{x_{i_1},\dots,x_{i_{\ell}}\}$.
\begin{itemize}
\item 
Form the product $P_1(x,x_1, \dots , x_k) \cdot P_2(x,x_1, \dots , x_k)$.
\item
Delete all monomials where some $x_j$ with $1 \leq j \leq \ell$ appears with an exponent of 1. \\
If $u$ and $v$ merge into vertex $w$, then we either want both $u$ and $v$ in the independent set (to account for an independent set containing $w$), or neither (to account for an independent set not containing $w$).
\item
Replace $x_j^2$ by $x_j / x$ for $1 \leq j \leq \ell$. \\
Division by $x$ compensates the double count of a vertex in the independent set (counting $u$ and $v$ for $w$).
\item
Replace $x_j^2$ by $x_j$ for $\ell+1 \leq j \leq k$. \\
If label $j$ is not merged, then $x_j$ just indicates whether there are any vertices labeled $j$ in the independent set.
\end{itemize}

The case of a simple disjoint union ($G=H_1 \oplus H_2$) is the special case $\ell =0$ of the just described situation. Here, we just compute the product $P_1(x,x_1, \dots , x_k) \cdot P_2(x,x_1, \dots , x_k)$ and delete all monomials $x_j$ (for $1 \leq j \leq k$) appearing with an exponent 1 to obtain $P_r(x,x_1, \dots , x_k)$.

We now consider the relabeling operation $\rh{i}{j}$. First assume, there will be no succeeding $\mer{j}$ operation.
Let $G$ be obtained from $H$ by the operation $\rh{i}{j}$. Then the $[k]$-labeled independent set polynomial $P(x,x_1, \dots , x_k)$ of $G$ is obtained from $\tilde{P}(x,x_1, \dots , x_k)$ by substituting $x_j$ for $x_i$
and then replacing $x_j^2$ by $x_j$.

If $G$ is obtained from $H$ by the operation $\mer{j} \rh{i}{j}$, then we assume that in $H$ there is just one vertex labeled $i$ and one vertex labeled $j$. In this case, we proceed as follows, with the same reasoning as for the disjoint union $\oplus$ operation.
\begin{itemize}
\item 
In the given polynomial $\tilde{P}(x,x_1, \dots , x_k)$ of $H$, substitute $x_j$ for $x_i$.
\item
Delete all monomials where $x_j$ appears with an exponent of 1.
\item
Replace $x_j^2$ by $x_j$.
\end{itemize}

To prove the polynomial time claim, it is important to notice that all polynomials have at most $k+1$ variables, and all monomials are linear in each of their variables. Thus there are at most $2^{k+1}$ monomials. The number of arithmetic operations is $O(k^2 n)$, as every efficient $k$-fusion-tree expression has at most $O(k^2)$ unary operations in a row. Thus for $k$ a constant, the time is at most $O(k^2 2^{2k}n)$ if a trivial polynomial multiplication algorithm is used. With fast polynomial multiplication, based on fast Fourier transform, the time goes down to $O(k^3 2^k n)$. As the total number of independent sets is at most $2^n$, it is sufficient to do the computations with numbers of length $O(n)$. Thus, each arithmetic operation requires even with school multiplication only quadratic time. 
\qed
\end{proof}

Note that we do not claim that this was a difficult theorem. To the contrary, the point was to illustrate that a fast algorithm for a typical problem like computing the independent set polynomial restricted to bounded clique-width can be extended to a fast algorithm for this problem for graphs with the same fusion-width, i.e., for a much larger class of graphs.

\section{Relations between Tree-Width, Clique-Width and Fusion-Width}

We have $\fw(G) \leq \tw{G} + 2$ by Theorem \ref{thm:width}. 
The following inequality is trivial, as $k$-fusion-tree expressions are strictly more powerful than $k$-expressions.

\begin{proposition} \label{prop:width} \cite{CourcelleM2002}
$\fw(G) \leq \cw(G)$.
\end{proposition}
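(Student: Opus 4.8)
The plan is to show that every $k$-expression can be read as a $k$-fusion-tree expression generating the same graph, so that $\fw(G)\leq\cw(G)$ follows immediately. Since the fusion-width is defined as the smallest $k$ admitting a $k$-fusion-tree expression, and the clique-width as the smallest $k$ admitting a $k$-expression, it suffices to exhibit, for any fixed $k$, an injection from $k$-expressions into $k$-fusion-tree expressions that preserves the generated graph. This is essentially the content of the first theorem in the excerpt, so the proof is little more than an invocation of that observation applied to an optimal $k$-expression.

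Concretely, suppose $\cw(G)=k$ and fix a $k$-expression $E$ generating $G$. First I would note that the operation sets almost coincide: the unary operations $\et{i}{j}$ and $\rh{i}{j}$ and the binary operation $\oplus$ appear verbatim in both formalisms, with identical semantics. The only genuine discrepancy is the atomic vertex-creation operation: $k$-expressions use $i(v)$, whereas $k$-fusion-tree expressions use $\im{i}{m}$. I would handle this by the syntactic substitution already indicated earlier, replacing each atom $i(v)$ by $\im{i}{1}$, which creates a single isolated vertex with label $i$ — exactly what $i(v)$ produces. Performing this replacement throughout $E$ yields a $k$-fusion-tree expression $E'$ over the same label set $\{1,\dots,k\}$.

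It then remains to observe that $E'$ generates the same labeled graph as $E$ at every stage, which follows by a trivial structural induction on the expression tree: the base case is the atom replacement just described, and each inductive case is an operation whose semantics are literally identical in the two systems. Hence $E'$ generates $G$, so $G$ admits a $k$-fusion-tree expression and $\fw(G)\leq k=\cw(G)$. The merge operation $\mer{i}$, which is the one operation present in fusion-tree expressions but absent from $k$-expressions, is simply never used in $E'$; this is precisely why the inequality goes only one way and reflects that $k$-fusion-tree expressions are strictly more expressive.

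There is essentially no obstacle here: the statement is a containment of one syntactic class inside a richer one, and the only thing to verify is that no operation of the smaller class is lost or altered in the embedding. The sole point requiring any care is the atomic-vertex convention, and even that is resolved by the one-line substitution $i(v)\mapsto\im{i}{1}$. This is the same reasoning that proves the earlier theorem, and the proposition is stated as trivial in the excerpt for exactly this reason.
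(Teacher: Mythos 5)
Your argument is correct and is essentially identical to the paper's: the paper also treats this as the trivial observation that every operation of a $k$-expression is available in a $k$-fusion-tree expression, with $i(v)$ replaced by $\im{i}{1}$. Nothing further is needed.
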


The following main result immediately implies that the graphs of bounded clique-width are exactly the graphs of bounded fusion-width. 
In fact this implication has already been shown by Courcelle and Makowsky \cite{CourcelleM2002}, as they prove the existence of a function $f$ with $\cw(G) \leq f(\fw(G))$. We still present our direct proof, because we get a much stronger result, and also because the logic framework of \cite{CourcelleM2002} might not be so widely accessible.
Many of the proof ideas are from the corresponding Theorem of Corneil and Rotics \cite{CorneilR2005}, relating tree-width to clique-width. %

\begin{theorem} \label{thm:main}
Graphs with fusion-width $\fw(G)=k$ have clique-width $\cw(G)$ at most $k 2^{k}$.
\end{theorem}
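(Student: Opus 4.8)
The plan is to compile the given $k$-fusion-tree expression into an ordinary clique-width expression, processing it bottom-up, and to work over the enlarged label set $[k]\times 2^{[k]}$, which has exactly $k2^{k}$ elements and therefore yields the bound $\cw(G)\le k2^{k}$. The first component of a label records the \emph{fusion-label} $i$ that the vertex currently carries in the fusion semantics; the second component is a subset $S\subseteq[k]$ recording the set of fusion-labels to which the vertex is \emph{completely joined} (adjacent to every current vertex of that label). The key invariant I would carry along is that any two vertices sharing the same pair $(i,S)$ are twins, so that the neighborhood of every vertex is a union of $(i,S)$-classes. This is precisely the property that lets a single $\et{a}{b}$-type operation in clique-width recover an entire neighborhood when it is needed.

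Under this invariant the clique-width operations $\oplus$, $\et{i}{j}$, and $\rh{i}{j}$ are easy to emulate: each fusion operation is replaced by a bounded batch of clique-width operations ranging over the $2^{k}$ possible subset-components. For $\et{i}{j}$ one joins every class with first component $i$ to every class with first component $j$ and updates the $S$-components to record the new complete joins; for $\rh{i}{j}$ one renames first components, using spare labels as scratch space so that two classes that become equal in the first component but differ in adjacency keep distinct $S$-components and the twin invariant survives. The atom $\im{i}{m}$ just creates an $(i,\emptyset)$-class of $m$ twins. None of these steps leaves the label set $[k]\times 2^{[k]}$, and each preserves ``every neighborhood is a union of label-classes.''

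The crux is the merge $\mer{i}$, since clique-width has no vertex-deletion or vertex-identification operation. I would first normalize, exactly as in the proof of Theorem~\ref{thm:indset}, so that every merge is binary and occurs immediately after a $\oplus$ or a $\rh{i}{j}$ that has just produced two label-$i$ vertices (or two label-$i$ twin-classes built by earlier merges). To ``fuse'' two such classes $C_u$ and $C_v$ whose members genuinely have \emph{divergent} neighborhoods, I use the invariant: each of $N(C_u)$ and $N(C_v)$ is a union of $(i,S)$-classes, so I can cross-join $C_u$ to $N(C_v)$ and $C_v$ to $N(C_u)$, and mutually join $C_u$ to $C_v$ (via a temporary relabeling, since $\et{i}{i}$ is disallowed), turning $C_u\cup C_v$ into a single twin-class with the \emph{union} neighborhood. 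Because clique-width is monotone under taking induced subgraphs, it suffices to build the graph $\hat G$ in which every merged super-vertex is realized by its whole twin-class; then $G$ is the induced subgraph of $\hat G$ obtained by keeping one representative per class, and $\cw(G)\le\cw(\hat G)\le k2^{k}$.

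The main obstacle, and the source of the exponential factor, is maintaining the twin invariant across $\rh{i}{j}$ and the merges. A relabeling can force two classes with different adjacency profiles to share a first component, and a merge unites two classes whose external neighborhoods differ; to keep neighborhoods expressible as unions of label-classes one must separate classes by their complete-join profile, and there are exactly $2^{k}$ such profiles. Verifying that the $S$-components can always be updated consistently within these $2^{k}$ possibilities—so that the invariant is restored after each operation without ever needing more than $k2^{k}$ labels—is the delicate part of the argument, and it is here that the ideas of Corneil and Rotics~\cite{CorneilR2005} relating tree-width to clique-width are adapted to the fusion setting.
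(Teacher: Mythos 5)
There is a genuine gap, and it sits exactly where you place the ``delicate part'': the twin invariant you want to carry along is not maintainable with $k2^{k}$ labels, and without it your treatment of $\mer{i}$ collapses. The invariant asserts that every neighborhood is a union of current $(i,S)$-classes. But when a vertex $u$ (a copy destined to be merged) receives its edges through some $\et{i}{j}$, it becomes adjacent to the vertices that carry label $j$ \emph{at that moment}; afterwards, $\oplus$ and relabelings keep adding fresh vertices to label $j$ that are \emph{not} adjacent to $u$. So at the time of the merge, $N(u)$ is a union of \emph{historical} label classes, and no bounded second component $S\subseteq[k]$ can keep all historical classes separated as current classes (already a path, of fusion-width at most $3$, has $n$ pairwise non-twin vertices, so any scheme with $O(1)$ classes must put non-twins into one class). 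Your own semantics for $S$ (``completely joined to every \emph{current} vertex of that label'') makes this concrete: after a $\oplus$ brings in new label-$j$ vertices you must either drop $j$ from $S_u$, losing $N(u)$, or keep it and falsify the invariant. Either way the cross-join ``$C_u$ to $N(C_v)$'' cannot be issued as a batch of $\eta$'s without creating spurious edges. (The reduction to binary merges and the monotonicity of clique-width under induced subgraphs are both fine, but they do not repair this step.)

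The paper's proof circumvents precisely this difficulty by inverting the bookkeeping. A vertex that will ever be merged is \emph{never created early}: it is created exactly once, by $\im{\hat{i}}{1}$, at the point of its \emph{last} merge in the parse tree (where $\hat{i}$ is the label it carries there). All $\et{i}{j}$ operations that would have touched its precursors are \emph{deferred}; each would-be neighbor $w$ records, in the second component of its label in $L\times\mathcal{P}(L)$, the final labels $\hat{i}$ of merged vertices it still owes an edge to, and the corresponding $\eta$'s are issued immediately after the deferred creation. This works because the ``remembering'' is attached to the neighbor at the instant the original $\et{i}{j}$ fires, when that neighbor is correctly identified by its current label; nothing about the merged vertex's past neighborhood ever needs to be reconstructed. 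So while your label set $[k]\times 2^{[k]}$ and the resulting bound coincide with the paper's, the meaning you assign to the second component (past complete-join profile) is the wrong one; it must instead record future deferred adjacencies to not-yet-created merged vertices.
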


\begin{proof}
We assume, we are given a $k$-fusion-tree expression $E$ describing a graph $G$, and we want to construct a $k 2^{k}$-expression describing the same graph $G$.

We have to focus on the operation $\mer{i}$ merging all vertices labeled $i$ into one vertex. This is the only operation that has to be eliminated, because it is allowed in $k$-fusion-tree expressions determining the fusion-width, but not in $k$-expressions determining the clique-width.

We view the parse trees $T$ of the $k$-fusion-tree expression. We want to transform it into a parse tree $T'$ of a $k$-expression. The main idea is that if the vertices of some label $i$ are involved in a merge operation, we focus on the highest location $\ell$ in $T$ where such a merge occurs involving label $\hat{i}$, where $\hat{i}$ is either $i$ or a label $i$ has been changed to. 

At the corresponding location $\ell'$ in $T'$, we create the single vertex $v$ to which the vertices  labeled $\hat{i}$ have been merged by $\mer{\hat{i}}$ using the operation $\im{\hat{i}}{1}$. This means that all the vertices 
$v_1, \dots v_p$ which are finally merged into $v$ are not available further down in the tree $T'$. Therefore all operations in $T$ involving the labels of $v_1, \dots v_p$ have to be delayed until the vertex creation operation at $\ell'$.

Let $L$ be the set of labels used in $T$. The new labels in $T'$ are from $L \times \mathcal{P}(L)$, where $\mathcal{P}$ is the powerset of $L$. This way, every new label can retain its own (old) identity and in addition remember all the other old labels to which its vertices should actually be adjacent according to the edge constructing operations $\et{i}{j}$ issued in the subtree of the current node in $T$. 

Whenever a label changes due to a renaming operation $\rh{i}{j}$, in $T$, then every occurrence of $i$ (in the first or second component) of a label in $T'$ is changed to~$j$.

We say that a label $i=i_1$ is subject to a merge operation as label $\hat{i} = i_{q+1}$, if there is a sequence of label change operations $\rh{i_p}{i_{p+1}}$ ($p=1,2,\dots,q$, $q \geq 0$) (i.e., there might be no label changes and $\hat{i}=i$), such that after these changes, label $i_{q+1}$ is involved in a merge operation. 

Now, any $\et{i}{j}$ operation in $T$ is handled as follows.
\begin{itemize}
\item
If neither label $i$ nor $j$ are subject to a merge operation,
then in the corresponding location (involving several nodes) of $T'$, $\et{i'}{j'}$ is issued for all labels $i'$ with first component $i$ and labels $j'$ with first component $j$.
\item 
If label $i$ is subject to a merge operation as label $\hat{i}$, and $j$ is not subject to a merge operation, then the label $i$ is added to the second component of $j$. This way nodes labeled $j$ "remember" to create an edge to label $\hat{i}$ later. This edge is created immediately after the last merge of label $\hat{i}$.
\item
If label $i$ is subject to a last merge operation as label $\hat{i}$ after $j$ has been subject to its last merge operation, then the label $i$ is added to the second component of $j$. This way nodes labeled $j$ "remember" to create an edge to label
 $\hat{i}$ later. This edge is created immediately after the last merge of label~$\hat{i}$. \qed
\end{itemize} 
\end{proof}

\begin{corollary}
A class of graphs is of bounded fusion-width if and only if it is a class of bounded clique-width.
\end{corollary}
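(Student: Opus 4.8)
The plan is to obtain both directions of the equivalence directly from the two width inequalities already established, namely Proposition~\ref{prop:width} giving $\fw(G) \leq \cw(G)$ and Theorem~\ref{thm:main} giving $\cw(G) \leq \fw(G)\, 2^{\fw(G)}$. No new construction is needed; the entire content is that each bound is a function of the other width alone, so that a uniform bound on one parameter across a class forces a uniform bound on the other.

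For the direction ``bounded fusion-width implies bounded clique-width'', I would start with a class $\mathcal{C}$ for which there is a constant $k$ with $\fw(G) \leq k$ for every $G \in \mathcal{C}$. Applying Theorem~\ref{thm:main} to each such $G$ yields $\cw(G) \leq \fw(G)\, 2^{\fw(G)}$. Since the function $t \mapsto t\, 2^{t}$ is monotonically increasing for $t \geq 0$, the right-hand side is at most $k\, 2^{k}$, a constant independent of the particular graph. Hence $\cw(G) \leq k\, 2^{k}$ throughout $\mathcal{C}$, and so $\mathcal{C}$ has bounded clique-width.

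For the converse direction ``bounded clique-width implies bounded fusion-width'', I would take a class $\mathcal{C}$ with a constant $k$ such that $\cw(G) \leq k$ for all $G \in \mathcal{C}$. Proposition~\ref{prop:width} immediately gives $\fw(G) \leq \cw(G) \leq k$, so the fusion-width is bounded by the same constant $k$ across the class. Combining the two implications yields the claimed equivalence.

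There is no genuine obstacle in this argument, since the heavy lifting has already been done in Theorem~\ref{thm:main}. The only point worth making explicit is the monotonicity of $t \mapsto t\, 2^{t}$ invoked above, which is what allows the per-graph bound supplied by Theorem~\ref{thm:main} to be replaced by a single class-wide constant.
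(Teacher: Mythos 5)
Your proof is correct and is essentially the paper's own argument: the paper also derives the corollary immediately from Proposition~\ref{prop:width} and Theorem~\ref{thm:main}, one inequality per direction. Your added remark on the monotonicity of $t \mapsto t\,2^{t}$ is a harmless elaboration of a step the paper leaves implicit.
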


\begin{proof}
This follows immediately from Proposition~\ref{prop:width} and Theorem~\ref{thm:main}.
\qed
\end{proof}

\begin{corollary}
If a problem can be solved in time $O(f(k) n^c)$ for graphs of fusion-width at most $k$, then it can be solved in time $O(f(k+2) n^c)$ for graphs of tree-width at most $k$.
\end{corollary}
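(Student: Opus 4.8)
The plan is to read this off directly from Theorem~\ref{thm:width}, which already does all the work. Let $G$ be a graph on $n$ vertices with $\tw(G) \le k$. By Theorem~\ref{thm:width} we have $\fw(G) \le k+2$, so $G$ lies in the class of graphs of fusion-width at most $k+2$. Applying the hypothesis at parameter value $k+2$ in place of $k$ then solves the problem on $G$ in time $O(f(k+2)\, n^c)$. Since the polynomial exponent $c$ is untouched by the parameter shift and only the $f(\cdot)$ factor changes, the claimed bound follows.

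The only point that needs a word of care is the input representation: an algorithm running in time $O(f(k)\,n^c)$ on graphs of fusion-width $k$ will in general expect a $k$-fusion-tree expression as part of its input, just as the algorithm of Theorem~\ref{thm:indset} does. First I would therefore produce such an expression. This is exactly what the proof of Theorem~\ref{thm:width} supplies: a constructive, bottom-up transformation of a width-$k$ tree decomposition into a $(k+2)$-fusion-tree expression, running in time polynomial in the size of the decomposition. Combined with the linear-time construction of a tree decomposition of width $k$ for fixed $k$ cited in the introduction, this preprocessing costs only $O(n)$ time (for fixed $k$), which is absorbed into the $O(f(k+2)\, n^c)$ bound.

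I do not expect a genuine obstacle here; the statement is an immediate instantiation of Theorem~\ref{thm:width} at the specific running-time function $T(n,k) = O(f(k)\,n^c)$, and indeed it is just the earlier corollary written with this particular form of $T$. The mildly delicate part is purely the bookkeeping above, namely making sure that the cost of converting a tree decomposition into a fusion-tree expression does not enlarge the exponent $c$. It does not, because that conversion is linear for fixed $k$ and the entire parameter dependence is swept into $f(k+2)$.
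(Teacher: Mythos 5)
Your proposal is correct and matches the paper exactly: the paper states this corollary as an immediate consequence of Theorem~\ref{thm:width} (fusion-width at most tree-width plus two) with no further argument, which is precisely your first paragraph. Your additional remarks about constructing the $(k+2)$-fusion-tree expression from a tree decomposition in time absorbed by the bound are sensible bookkeeping that the paper leaves implicit.
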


This immediate corollary compares favorably with the following fact.
If a problem can be solved in time $O(f(k) n^c)$ for graphs of clique-width at most $k$, then it can be solved in time $O(2^{f(k)} n^c)$ for graphs of tree-width at most $k$.

This statement cannot be much improved, as clique-width can be exponentially larger than fusion-width.

\begin{theorem} \cite{CorneilR2005} \label{thm:lower}
 For any $k$, there is a graph $G$ with $\tw(G) = k$ and $\cw(G)\geq 2^{\lfloor k/2 \rfloor - 1}$.
\end{theorem}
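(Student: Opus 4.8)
The plan is to produce, for each $k$, an explicit graph $G=G_k$ and to argue the two bounds separately: an easy tree decomposition gives $\tw(G)=k$, while a cut-based counting argument gives the exponential clique-width lower bound. The two ingredients are essentially independent, and the whole difficulty sits in the second one.

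For the clique-width lower bound I would use the standard ``labels remember neighbourhoods'' principle. In any $k$-expression, consider the parse tree; at every node $t$ the subexpression has produced an induced labelled subgraph on a vertex set $V_t$, and the defining property of the edge operations $\et{i}{j}$ is that two vertices of $V_t$ carrying the same label are \emph{indistinguishable} to every vertex created outside the subtree of $t$. Hence the vertices of $V_t$ fall into at most $k$ classes according to their neighbourhood in $V\setminus V_t$; equivalently, across the cut $(V_t,V\setminus V_t)$ the number of distinct sets $N(v)\cap(V\setminus V_t)$ for $v\in V_t$ is at most $\cw(G)$. Since a binary parse tree on $n$ leaves always has a node $t$ with $|V_t|\in[n/3,2n/3]$, it suffices to exhibit a graph in which \emph{every} balanced bipartition $(A,B)$ has at least $2^{\lfloor k/2\rfloor-1}$ distinct cross-neighbourhoods $\{N(a)\cap B:a\in A\}$; any expression using fewer than $2^{\lfloor k/2\rfloor-1}$ labels would then contradict the balanced-node property.

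For the construction I would take a core that simultaneously pins the tree-width at $k$ and forces neighbourhood diversity no matter how a balanced cut falls. Concretely I would use a separator-like set $S$ of about $k$ vertices (made into a clique, or carrying a $K_{k+1}$ minor, to force $\tw(G)\ge k$) together with a large family of \emph{subset vertices} that encode, via their adjacencies into a designated half $S_0\subseteq S$ of size $\lfloor k/2\rfloor$, all $2^{\lfloor k/2\rfloor}$ subsets of $S_0$. The family is installed symmetrically and in enough copies that it cannot be pushed entirely to one side of a balanced cut. The upper bound $\tw(G)\le k$ then follows from an explicit width-$k$ tree decomposition that keeps $S$ in every bag and introduces the subset vertices one at a time, and the matching lower bound from the $K_{k+1}$ core; together these give $\tw(G)=k$.

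The main obstacle is the robust lower bound on neighbourhood diversity across \emph{all} balanced cuts. For a single fixed cut the subset vertices realize up to $2^{\lfloor k/2\rfloor}$ distinct neighbourhoods into $S_0$, but an adversarial cut can shrink this by placing all of $S_0$ on the same side as most subset vertices (collapsing their visible neighbourhoods) or by splitting the subset family unevenly. The construction must be arranged so that any overall-balanced bipartition still leaves, on one side, at least half of the $2^{\lfloor k/2\rfloor}$ subset-types together with enough of $S_0$ on the other side to distinguish them; this is precisely where the drop from $2^{\lfloor k/2\rfloor}$ to $2^{\lfloor k/2\rfloor-1}$, and the appearance of $\lfloor k/2\rfloor$ rather than $k$ in the exponent, originate. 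Once this diversity bound holds for every balanced bipartition, combining it with the parse-tree argument of the second paragraph yields $\cw(G)\ge 2^{\lfloor k/2\rfloor-1}$, completing the proof.
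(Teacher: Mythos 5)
The paper itself does not prove this statement: it is quoted directly from Corneil and Rotics \cite{CorneilR2005}, so there is no internal proof to compare against. Judged on its own, your proposal gets the general machinery right, and it is the same machinery used in the cited source: in any $k$-expression, two vertices that carry the same label at a node $t$ of the parse tree are adjacent to exactly the same vertices of $V\setminus V_t$ in the final graph, so the number of distinct sets $N(v)\cap(V\setminus V_t)$ over $v\in V_t$ is at most $\cw(G)$, and a balanced node $t$ with $|V_t|\in[n/3,2n/3]$ always exists. That part is sound.

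The gap is that the theorem lives or dies with the construction, and you have not actually produced one; the single-separator gadget you sketch demonstrably fails. Note the directionality of the label argument: it only controls neighbourhoods \emph{from} the already-built side $V_t$ \emph{into} the outside, never the reverse. So consider the bipartition $A=S\cup(\mbox{half of the subset vertices})$, $B=(\mbox{the other half})$. Every subset vertex in $A$ has empty neighbourhood in $B$, and the vertices of $S$ contribute at most $|S|$ further distinct traces, so this cut witnesses only about $k+1$ distinct sets $N(a)\cap B$ --- and an adversarial expression can simply build $S$ together with half of the subset family first. Taking ``enough copies'' of the family does not help while there is only one copy of $S$ for them to attach to; one needs many separator copies arranged (as Corneil and Rotics do, with a substantially more intricate graph) so that every balanced cut is forced to split a copy whose $S_0$-half lies on the far side of its subset vertices. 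You correctly name this as the main obstacle, but naming it is not resolving it, and it is precisely where all the difficulty of the theorem sits. The treewidth bookkeeping is also off as written: a $k$-clique $S$ only forces $\tw(G)\ge k-1$, and your subset vertices, being adjacent only to $S_0$ with $|S_0|=\lfloor k/2\rfloor$, do not supply the missing $K_{k+1}$ minor, so $\tw(G)=k$ is not established either. These are fixable, but as it stands the proposal is a correct proof strategy wrapped around a missing construction.
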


\begin{corollary} \label{cor:lower}
  For any $k$, there is a graph $G$ with fusion-width $\fw(G) = k$ and clique-width $\cw(G)\geq 2^{\lfloor k/2 \rfloor - 2}$.
\end{corollary}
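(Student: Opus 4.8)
The plan is to specialize the Corneil--Rotics family of Theorem~\ref{thm:lower} to the right parameter and then convert its tree-width bound into a fusion-width bound through Theorem~\ref{thm:width}. Concretely, I would invoke Theorem~\ref{thm:lower} with the tree-width parameter set to $k-2$: this produces a graph $G$ with $\tw(G) = k-2$ and
\[ \cw(G) \geq 2^{\lfloor (k-2)/2 \rfloor - 1}. \]
Since $\lfloor (k-2)/2 \rfloor = \lfloor k/2 \rfloor - 1$ for every $k$, the exponent simplifies to $\lfloor k/2 \rfloor - 2$, which is exactly the clique-width lower bound $\cw(G) \geq 2^{\lfloor k/2 \rfloor - 2}$ demanded by the corollary.

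For the fusion-width I would apply Theorem~\ref{thm:width} to this same $G$: from $\tw(G) = k-2$ we obtain $\fw(G) \le \tw(G) + 2 = k$. Combined with the previous paragraph, this already exhibits, for every $k$, a single graph $G$ with $\fw(G) \le k$ and $\cw(G) \ge 2^{\lfloor k/2 \rfloor - 2}$, which is the desired exponential separation: the clique-width is exponential in the fusion-width. Indeed, since $\fw(G) \le k$ we also have $\cw(G) \ge 2^{\lfloor \fw(G)/2 \rfloor - 2}$, so the separation is genuinely expressed in terms of $\fw(G)$ itself, not merely of $k$.

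The one delicate point is the exact equality $\fw(G) = k$ stated in the corollary, as opposed to the upper bound $\fw(G) \le k$ just established, and I expect this to be the main obstacle. The paper supplies only convenient upper bounds on fusion-width (Theorem~\ref{thm:width} and Proposition~\ref{prop:width}) and no direct tool for computing it exactly. I would first observe that the upper bound is essentially tight here: Theorem~\ref{thm:main} gives $\cw(G) \le \fw(G)\, 2^{\fw(G)}$, so the clique-width lower bound forces $\fw(G) = \Omega(k)$, confirming that the fusion-width really does grow linearly. To pin it to exactly $k$, I would take the disjoint union of $G$ with a small gadget of fusion-width exactly $k$ and clique-width at most that of $G$, using that fusion-width of a disjoint union is the maximum over components (the upper bound follows by combining the two expressions with $\oplus$, the lower bound from the expected monotonicity of fusion-width under taking induced subgraphs). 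This raises the fusion-width to $k$ without lowering the clique-width bound. The real content of this final step, and the crux of the whole obstacle, is to exhibit such a gadget, i.e.\ a graph whose fusion-width can be computed \emph{exactly} rather than merely bounded from above.
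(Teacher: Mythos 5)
Your proposal is essentially identical to the paper's own proof: instantiate Theorem~\ref{thm:lower} at tree-width $k-2$, observe that $2^{\lfloor (k-2)/2 \rfloor - 1} = 2^{\lfloor k/2 \rfloor - 2}$, and apply Theorem~\ref{thm:width} to get $\fw(G) \le k$. The ``delicate point'' you flag is real but is not resolved by the paper either --- its proof also only establishes the upper bound $\fw(G) \le k$ and silently treats it as equality --- so your observation that the stated $\fw(G) = k$ would require an additional exact-computation gadget is a legitimate criticism of the corollary as written rather than a gap in your argument; the exponential separation, which is the actual content, is fully proved by what you have.
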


\begin{proof}
This follows from Theorem~\ref{thm:width} and Theorem~\ref{thm:lower} 
by picking a graph with the properties of Theorem~\ref{thm:lower} and noticing that by Theorem~\ref{thm:width} its fusion-width is at most $k+2$.
Then $2^{\lfloor (k-2)/2 \rfloor - 1} = 2^{\lfloor k/2 \rfloor - 2}$ produces the result.
\qed
\end{proof}

Corollary \ref{cor:lower} shows that our example of the independent set polynomial proves the fusion-width to be a powerful notion. There are graphs with fusion-width $\fw(G) = k$ and clique-width $\cw(G)\geq 2^{\lfloor k/4 \rfloor - 2}$. If for such a graph, we have a $k$-fusion-tree expression,  then we can compute its independent set polynomial in time $O(k^3 2^k n)$ by the method of Theorem~\ref{thm:indset}. Using just the fact that the clique-width $\cw(G)\geq 2^{\lfloor k/4 \rfloor - 2}$, we would be unlikely to find a better algorithm based on clique-width. Thus we would only have an 
an algorithm that is doubly exponential in $k$ for computing the independent set polynomial of these graphs.

We believe that the independent set polynomial is not an isolated instance. It has merely been used to illustrate the convenience and power of the fusiion-with parameter. Many other examples could be used instead.

\section{Conclusion}
We have introduced a new width measure, the fusion-width. Its purpose is two-fold. It provides a tool for handling generally difficult problems for a large class of graphs. It also sheds a light on the essence of the generalization from bounded tree-width to bounded clique width. It is the ability at each stage of the construction not only to add edges between a limited number of vertices, but to add complete bipartite graphs between a limited number of sets of vertices.

\section{Open Problems}  %
What is the complexity of determining the fusion-width of a graph? Is it in XP (fixed parameter polynomial time) or even in FPT (fixed parameter tractable)? How well can fusion-width be approximated?

Find interesting classes of graphs with a large clique-width and a small fusion-width.

What is the relationship between fusion-width, rank-width, and boolean width?

\section{Acknowlegement}
The help of previous anonymous reviewers has improved this paper significantly.

\end{document}